\definecolor{myurlcolor}{rgb}{0,0,0.4}
\definecolor{mycitecolor}{rgb}{0,0.5,0}
\definecolor{myrefcolor}{rgb}{0.5,0,0}
\newcommand{\one}{\leavevmode\hbox{\small1\normalsize\kern-.33em1}}
\def\tr{\mbox{tr}}
\def\beq{\begin{equation}}
\def\eeq{\end{equation}}
\def\be{\begin{equation}}
\def\ee{\end{equation}}
\def\ben{\begin{eqnarray}}
\def\een{\end{eqnarray}}
\def\eea{\end{array}}
\def\bea{\begin{array}}
\newcommand{\bei}{\begin{itemize}}
\newcommand{\eei}{\end{itemize}}
\newcommand{\ket}[1]{|#1\rangle}
\newcommand{\bra}[1]{\langle#1|}
\newcommand{\proj}[1]{\ket{#1}\!\bra{#1}}
\def\pcal{{\cal P}}
\renewcommand{\emph}[1]{\textbf{#1}}
\theoremstyle{plain}
\newtheorem{fakt}{Fact}
\newtheorem*{thm*}{Theorem}
\theoremstyle{definition}
\theoremstyle{remark}
\begin{document}

\title{Simple sufficient condition for subspace to be completely or genuinely entangled}

\author{Maciej Demianowicz}
    \affiliation{{\it\small Institute of Physics and Applied Computer Science, Faculty of Applied Physics and Mathematics,
Gda\'nsk University of Technology, Narutowicza 11/12, 80-233 Gda\'nsk, Poland}}
\email{maciej.demianowicz@pg.edu.pl}
\author{Grzegorz Rajchel-Mieldzio{\'c}}
\affiliation{Center for Theoretical Physics, Polish Academy of Sciences, Aleja Lotnik\'{o}w 32/46, 02-668 Warsaw, Poland}
\author{Remigiusz Augusiak}
\affiliation{Center for Theoretical Physics, Polish Academy of Sciences, Aleja Lotnik\'{o}w 32/46, 02-668 Warsaw, Poland}

\date{\today}

\begin{abstract}
We introduce a simple sufficient criterion, which allows one to tell whether a subspace of a bipartite or multipartite Hilbert space is entangled. 
The main ingredient of our criterion is a bound on the minimal entanglement
of a subspace in terms of entanglement of vectors spanning that subspace expressed for geometrical measures of entanglement. The criterion is applicable to both completely and genuinely entangled subspaces.
We explore its usefulness in several important scenarios. Further, an entanglement criterion for mixed states following directly from the condition is stated. As an auxiliary result we provide a formula for the generalized geometric measure of entanglement of the $d$--level Dicke states.
\end{abstract}

\maketitle

\section{Introduction}
Quantum entanglement is one of the  central notions of modern physics and technology and it has been a subject of intensive efforts  in the recent decades towards its complete characterization.  
An important line of research in the field  is the one aiming at describing properties of completely \cite{Parthasarathy,Bhat,WalgateScott}  or genuinely entangled subspaces \cite{upb-to-ges,approach,Agrawal,4x4,antipin}, which are those subspaces of the composite Hilbert spaces that contain only entangled
or genuinely multiparty entangled (GME) states, respectively. This was primarily motivated by their theoretical importance as any state with support in an entangled subspace is necessarily entangled but they are also relevant from the practical point of view, e.g., in quantum error correction \cite{HuberGrassl} or in protocols where the existence of entanglement needs to be certified, for example super-dense coding \cite{Horodecki_2012}.

A notorious problem 
in this domain is to determine whether a given subspace is completely or genuinely entangled.
Certification of subspace entanglement is a hard task as it requires proving that any superposition of states in a subspace is entangled, or, phrasing it differently, the minimal subspace entanglement is non--vanishing. One way to achieve this is to consider lower bounds on the entanglement of superposed states from a subspace and check under which circumstances they give a positive number. The research on such bounds  has its own history \cite{Ma-negativity-1,Ma-negativity-2,Ma-GME-superposition,Akhta,Cavalcanti,Xiong,NisetCerf,OuFan,Song}, which was initiated in Ref. \cite{LPS}. Following this path, the authors of \cite{Gour} proved that the minimal subspace entanglement, as measured by the Schmidt rank, in the bipartite scenario  can be lower--bounded solely in terms of the entanglement of the basis states. Another approach to bounding the minimal subspace entanglement was put forward in Ref. \cite{ent-ges} in terms of semi--definite programs (SDPs). There, also a see-saw algorithm to compute it exactly was proposed. Both methods, however, do not exploit the entanglement of the basis states, but rather of the projection onto the subspace and are more involved. It is finally interesting to notice that it is possible to certify genuine entanglement of a subspace in a device-independent way, and the first strategies to this end have recently been presented in \cite{DIsubspaces,DIsubspaces2}.

In this work, building upon the lower bound on the geometric measure of entanglement (GM) of a superposition from \cite{Song},  we develop a very simple, yet nontrivial, sufficient condition for a subspace to 
exhibit certain entanglement properties such as being completely or genuinely entangled. The criterion involves only entanglement of the orthogonal spanning vectors
and is stated for a wide class of entanglement quantifiers, called geometrical measures (of which the GM is only an example). 
Its value lies in broad applicability as it works both in the bipartite and multipartite setups for any local dimensions (except for systems with a qubit subsystem in the former case), and is not limited by the Schmidt ranks of the basis states.
We provide several illustrative examples to investigate the power of our criterion. In particular, we discuss subspaces spanned by the (generalized) 
Greenberger–Horne–Zeilinger (GHZ), Dicke, and absolutely maximally entangled (AME) states, and the antisymmetric subspace. Furthermore, we show how this condition leads directly to an entanglement criterion for mixed states, determined by the entanglement of the states in the mixture.

\section{Preliminaries}\label{preliminaria}

Before presenting our main result we introduce relevant
notions and terminology. 


\subsection{Entanglement and separability} 

Let us begin with the simplest bipartite case and consider a Hilbert space $\mathcal{H}_{2,d}=\mathbbm{C}^d\otimes\mathbbm{C}^d$. 
A state $\ket{\psi}\in\mathcal{H}_{2,d}$ is termed \textit{separable} or \textit{product} iff $\ket{\psi}=\ket{\psi_1}\otimes\ket{\psi_2}$ for some
pure states $\ket{\psi_i}\in\mathbbm{C}^d$; otherwise it is called \textit{entangled}.
A particularly useful tool for the characterization of entanglement of bipartite pure states is {\it the Schmidt decomposition}: any $\ket{\psi}\in\mathcal{H}_{2,d}$ can be written as
\begin{equation}\label{schmidt}
    \ket{\psi}=\sum_{i=1}^{r}\lambda_i\ket{e_i}\ket{f_i},
\end{equation}
where the coefficients $\lambda_i$ are positive numbers that can be ordered as $\lambda_1\geq\lambda_2\geq\ldots\geq\lambda_r$, whose squares sum up to one, $\{\ket{e_i}\}_i$ and $\{\ket{f_i}\}_i$ are orthonormal bases of the local Hilbert spaces and $r\leq d$ is called {\it the Schmidt rank} of $\ket{\psi}$. The Schmidt decomposition allows to  decide efficiently whether a pure state is entangled or product and to determine the actual dimension of the local Hilbert spaces needed to support that state. Let us denote  by $Sch_r$ the set of all pure states
from $\mathcal{H}_{2,d}$ of the Schmidt rank at most $r$.  Clearly, $Sch_1 \subset Sch_2\subset  \dots \subset Sch_d$.

The multipartite scenario features a whole variety of different forms of entanglement. A particularly useful notion to tame them is that of the $k$-{\it producibility} \cite{GuhneBriegel}. An $N$--partite pure state $\ket{\psi}\in\mathcal{H}_{N,d}=(\mathbbm{C}^d)^{\otimes N}$ is termed {\it $k$-producible} if it can be written as 
\begin{equation}
\ket{\psi}=\ket{\phi_1}\otimes \ldots \otimes \ket{\phi_M},
\end{equation}
with each $\ket{\phi_i}$ corresponding to at most $k$ particles and $M\leq N$. Let us denote by $\mathcal{P}_k$ the set of all states that are $k$-producible. Then, $\mathcal{P}_N$ is the set of all pure states. Clearly, the following inclusions  hold true
$\mathcal{P}_1\subset \mathcal{P}_2\subset\ldots\subset \mathcal{P}_N$. Vectors from $\mathcal{P}_1$ are said to be {\it fully product}, whereas states belonging to $\pcal_k \setminus \pcal_1$ for  $k\ne 1$ are {\it entangled}, i.e., not fully product (although they still might be product across some cuts).  In particular, those belonging to $\mathcal{P}_{N}\setminus \mathcal{P}_{N-1}$ are called {\it genuinely multiparty entangled} (GME) as they do not display any form of separability. Another way of saying that a state is GME is that it is not product across any bipartition of the parties, or, equivalently, it is not {\it biproduct}.
Importantly, if $\ket{\psi}$ belongs to $\mathcal{P}_k$ but not to $\mathcal{P}_{k-1}$ we know that 
at least $k$ particles share genuinely multipartite entanglement; one also says that the 
{\it entanglement depth} of such a state is at least $k$.

Undoubtedly, the most widespread example of a GME state is the Greenberger-Zeilinger-Horne (GHZ) state
%
%
\begin{equation}
\ket{\mathrm{GHZ}_{N,d}}=\frac{1}{\sqrt{d}}\sum_{i=0}^{d-1}\ket{i}^{\otimes N}.
\end{equation}
%
%
%
Another important class of GME states that are
often considered in the quantum information context 
are the symmetric $N$-qubit Dicke states 
\begin{equation}\displaystyle
\ket{D_{N,k}}=\frac{1}{\sqrt{\binom{N}{k}}}\sum_p\sigma_p\left(\ket{0}^{\otimes (N-k)}\ket{1}^{\otimes k}\right),
\end{equation}
where $\{\sigma_p\}$ is the set of all distinct permutations. In other words, the 
Dicke states are symmetrized versions of pure states in which $N-k$ particles are in 
the ground state $\ket{0}$, while the remaining $k$ particles are in the excited state $\ket{1}$. We also consider their generalizations to arbitrary local dimensions in further parts of the paper.

In the case of genuine multipartite entanglement it also makes sense to define further genuine multiparty entanglement of bounded Schmidt rank. Namely, a state $\ket{\psi}\in\mathcal{H}_{N,d}$ is said to 
have genuine multiparty entanglement of $r$--bounded Schmidt rank if it is GME and has the Schmidt rank at least some $r$  with respect to any bipartition. The corresponding set is denoted $\mathcal{G}_r$. Even further, in some cases it might be interesting to consider states which have the same Schmidt rank across any bipartition.

\subsection{Completely and genuinely entangled subspaces}
Recently, we have witnessed growing interest in the research on subspaces, not only single states, possessing certain entanglement properties. This stems from both their theoretical and practical importance. In this respect two classes of such subspaces stand out: completely and genuinely entangled ones, however, it is also meaningful to consider  subspaces with other properties considered above.

Let us formally introduce respective subspaces.
Consider a proper subspace $V\subset \mathcal{H}_{N,d}$. We call $V$ a \textit{completely entangled subspace (CES)} iff any vector belonging to it is entangled, or, in other words, $V$ contains no fully product vectors \cite{Parthasarathy,Bhat}. 
Well-known examples of such subspaces are those obtained from unextendible product bases (UPB) -- another interesting object introduced to construct entangled states with positive partial transpositions \cite{UPB}. A UPB is a set of fully product vectors, spanning a proper subspace of a given Hilbert space, with the property that there does not exist a fully product vector orthogonal to all the vectors from the set. It follows that every subspace complementary to a UPB is completely entangled.
 To give an example let us for a moment focus on $\mathcal{H}_{3,2}=\mathbbm{C}^{2}\otimes\mathbbm{C}^{2}\otimes \mathbbm{C}^{2}$ and consider the following product vectors from it:
%
$\ket{0,0,0}$,
$\ket{1,+,-}$,
$\ket{-,1,+}$,
$\ket{+,-,1}$,
%
where $\{\ket{0},\ket{1}\}$ and $\{\ket{+},\ket{-}\}$ with $\ket{\pm}=(\ket{0}\pm\ket{1})/\sqrt{2}$, are two orthonormal bases in $\mathbbm{C}^2$. It is direct to check that these vectors form a UPB in $\mathcal{H}_{3,2}$ and thus the four-dimensional subspace orthogonal to it is a CES.

Further, we call $V\subset \mathcal{H}_{N,d}$ a \textit{genuinely entangled subspace (GES)} iff any pure state from $V$ is GME \cite{Cubitt, upb-to-ges}. To give an example of a GES let us consider again the three-qubit Hilbert space $\mathcal{H}_{3,2}$ and the vectors
$\ket{\mathrm{GHZ_{3,2}}}$ and $\ket{D_{3,1}}$ (also called the
$W$ state). Due to the fact that they are both symmetric, the two-dimensional subspace spanned by them is symmetric too. Now, if a biproduct vector was to belong in this subspace, it would have to be symmetric, and consequently fully product, i.e., of the form $\ket{e}^{\otimes 3}$ with some arbitrary qubit vector $\ket{e}$. It is, however, not difficult to see that there do not exist
$a,b\in\mathbbm{C}$ $(|a|^2+|b|^2=1)$ such that $a\ket{\mathrm{GHZ}_{3,2}}+b\ket{W}=\ket{e}^{\otimes 3}$. Another example of a GES, which is well--known in the literature, is
the antisymmetric subspace of $\mathcal{H}_{N,d}$; this subspace is genuinely entangled because there do not exist product vectors that are antisymmetric. Importantly, however, one should bear in mind that these subspaces only exist if $d\geq N$ and they are of small dimensionality, while it is known how to construct large GESs efficiently for any $d$ and $N$ \cite{upb-to-ges} (see also \cite{approach, antipin}).

As we have mentioned at the beginning of this section, other notions of  entanglement defined for pure states can also  be meaningfully mapped to subspaces. We say that $V\subset \mathcal{H}_{N,d}$ is a subspace of entanglement depth at least $k$ iff it consists of pure states from $\pcal_N\setminus \pcal_{k-1}$; simply speaking, such a subspace contains only pure states whose entanglement depth is at least $k$.
We then define bipartite CESs with bounded or equal Schmidt rank to be those subspaces of $\mathcal{H}_{2,d}$, which contain only vectors having Schmidt rank, respectively, equal to or larger than $r$ \cite{Cubitt} or exactly $r$ \cite{westwick,Cubitt}.  Analogously, in the multipartite Hilbert spaces, we can define GESs of bounded or equal Schmidt rank, this time, however, requiring that the property holds with respect to any bipartition.

\subsection{Entanglement quantifiers}\label{miary}

To quantify entanglement we will use measures, which, due to their interpretation as distances, fall into a broad class of geometric measures. They are defined through the following general formula
\begin{equation}\label{miara-geometryczna}
\mathcal{E}(\ket{\psi})=1-\max_{\ket{\varphi}\in \mathcal{S}}|\langle\varphi|\psi\rangle|^2,
\end{equation}
where $\mathcal{S}$ is chosen accordingly to purposes $\mathcal{E}$ is supposed to serve.
Prominent representatives of the class are the geometric measure of entanglement (GM) \cite{GME1,GME2} and the generalized geometric measure of entanglement (GGM) \cite{ggm}, and they will be primarily  used measures in the discussion of applications of our result.

 First, we concentrate on the bipartite case and consider a pure state $\ket{\psi}\in\mathcal{H}_{2,d}$. The geometric measure of entanglement of $r$--bounded Schmidt rank 
  is defined by
\begin{equation}\label{r-schmidt}
E_r(\ket{\psi})=1-\max_{\ket{\varphi}\in Sch_{r-1}}|\langle\varphi|\psi\rangle|^2.
\end{equation}
with maximum over states of Schmidt rank at most $r-1$.
In the particular case of $r=2$, one recovers the  definition of the geometric measure of entanglement (GM), $E_{GM}$, in which 
the maximum is taken over all product, i.e., Schmidt rank one, vectors from $\mathcal{H}_{2,d}$ \cite{GME1,GME2}. 
For any $r$ it holds (see Appendix \ref{schmidt-proof})
\begin{equation}\label{KyFan}
    E_r(\ket{\psi})=1-(\lambda^2_1+\lambda^2_2+\ldots+\lambda^2_{r-1}),
\end{equation} 
where $\lambda_i$'s are the Schmidt coefficients of $\ket{\psi}$ [cf. (\ref{schmidt})].
The maximal value of $E_r$ over all states is $1-(r-1)/d$  and it is achieved by the maximally entangled state of two qudits $\ket{\Phi_d^+}=(1/\sqrt{d})\sum_{i=0}^{d-1} \ket{ii}$.

In the multipartite case, we  define the geometric measure of $k$-producibility 
\begin{equation}\label{k-producible}
    E^{producib}_{k}(\ket{\psi})=1-\max_{\ket{\varphi}\in \mathcal{P}_{k-1}}|\langle\varphi|\psi\rangle|^2,
\end{equation}
where the maximum is taken over all $(k-1)$-producible states. 
In particular, for $k=2$ the maximisation is performed over fully product pure states and we obatin the definition of the GM in the multipartite case (for $N=2$ this obviously reduces to the previously recalled definition of the GM). Let us write it out explicitly
\begin{equation}\label{}
E_{GM}(\ket{\psi})=1-\max_{\ket{\varphi_{prod.}}}|\bra{\varphi_{prod.}} \psi\rangle|^2.
\end{equation}
On the other hand, for $k=N$ in Eq. (\ref{k-producible}) we recover a definition
 of the generalized geometric measure of entanglement (GGM), $E_{GGM}$, which is more commonly stated in terms of maximisation over biproduct states:
 \begin{equation}\label{}
 E_{GGM}(\ket{\psi})=1-\max_{\ket{\varphi_{biprod.}}}|\bra{\varphi_{biprod.}} \psi\rangle|^2.
 \end{equation}
 The GGM
is designed to quantify genuine multiparty entanglement of a state and its vanishing implies that a given state is not GME.

Further, we
define the geometric measure of genuine multiparty entanglement of $r$--bounded Schmidt rank
as 
\begin{equation}
    E^{\mathrm{GME}}_{r}(\ket{\psi})=1-\max_{\ket{\varphi}\in \mathcal{G}_{r-1}}|\langle\varphi|\psi\rangle|^2
\end{equation}
with the maximisation over GME states with the Schmidt rank at most $r-1$.

Now, for any entanglement measure $\mathcal{E}$, one can define the corresponding notion of 
subspace entanglement. Precisely, for a given  subspace $V\subset \mathcal{H}_{N,d}$, 
we define the minimal subspace entanglement as (cf. \cite{Gour})
\begin{equation}
    \mathcal{E}_{\min}(V)=\min_{\ket{\psi}\in V}\mathcal{E}(\ket{\psi}).
\end{equation}
Thus, $\mathcal{E}_{\min}(V)$ is the entanglement of
the least entangled state in $V$ according to a given quantifier $\mathcal{E}$.
In particular, $\mathcal{E}$ can be chosen to be one of the introduced entanglement measures:
$E_r$, $E_{k}^{producib}$, $E_r^{\mathrm{GME}}$, or, more specifically, $E_{GM}$ or $E_{GGM}$. In the latter case, non--vanishing of the minimal subspace entanglement implies that a given subspace is a CES (the GM case) or a GES (the GGM case).

We must note here that the restriction to equal local dimensions was made here only for a simpler presentation and all the measures are defined in the same manner in the case of unequal 
dimensions.
\section{Main result: sufficient condition for a subspace to be entangled}
We can now move on to our results. Let us begin with
a lower bound on the  entanglement of  a superposition of pure states. This bound has been derived in  Ref. \cite{Song} for the GM but it is easy to see that it holds in general for any geometric measure (\ref{miara-geometryczna}). We have the following.
\begin{fakt}\label{fakt1}
Let $\ket{\phi_i}$, $i=1,\ldots,k$, be pairwise orthogonal pure states from $\mathcal{H}_{N,d}$ and let 
%
$\ket{\Psi}=\sum_{i=1}^k \alpha_i\ket{\phi_i}$
%
be their arbitrary superposition with $\alpha_i\in\mathbbm{C}$ such that 
$|\alpha_1|^2+\ldots+|\alpha_k|^2=1$. Then, for any $\mathcal{E}$ defined in Eq. (\ref{miara-geometryczna}), the following inequality holds true 
\begin{eqnarray}\label{bound-superpozycja}
\hspace{-0.5cm}\mathcal{E}(\ket{\Psi})&\geq& \sum_{i=1}^{k}|\alpha_i|^2 \mathcal{E}(\ket{\phi_i})\\
&&-2\sum_{i<j}|\alpha_i\alpha_j|
\sqrt{1-\mathcal{E}(\ket{\phi_i})}\sqrt{1-\mathcal{E}(\ket{\phi_j})}. \nonumber
\end{eqnarray}
\end{fakt}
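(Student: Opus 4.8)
The plan is to work directly from the definition (\ref{miara-geometryczna}) and to exploit the single structural feature that makes the claim hold for \emph{any} geometric measure: the set $\mathcal{S}$ over which the maximization is performed is fixed, i.e., it does not depend on the state being measured. First I would fix the superposition $\ket{\Psi}$ and let $\ket{\varphi^*}\in\mathcal{S}$ be a vector attaining the maximum in the definition of $\mathcal{E}(\ket{\Psi})$, so that $\mathcal{E}(\ket{\Psi})=1-|\langle\varphi^*|\Psi\rangle|^2$. The whole problem then reduces to bounding the single overlap $|\langle\varphi^*|\Psi\rangle|^2$ from above.

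Expanding $\ket{\Psi}=\sum_i\alpha_i\ket{\phi_i}$ and applying the triangle inequality gives $|\langle\varphi^*|\Psi\rangle|\le\sum_i|\alpha_i|\,|\langle\varphi^*|\phi_i\rangle|$. The key step---and essentially the only idea needed---is that $\ket{\varphi^*}$ is itself an element of $\mathcal{S}$, hence a legitimate (though generally suboptimal) competitor in the maximization defining each $\mathcal{E}(\ket{\phi_i})$. Therefore $|\langle\varphi^*|\phi_i\rangle|^2\le\max_{\ket{\varphi}\in\mathcal{S}}|\langle\varphi|\phi_i\rangle|^2=1-\mathcal{E}(\ket{\phi_i})$ for every $i$, so that each overlap is controlled by $\sqrt{1-\mathcal{E}(\ket{\phi_i})}$.

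Substituting these bounds and squaring, I would expand $\big(\sum_i|\alpha_i|\sqrt{1-\mathcal{E}(\ket{\phi_i})}\big)^2$ into its diagonal and off-diagonal contributions, namely $\sum_i|\alpha_i|^2\big(1-\mathcal{E}(\ket{\phi_i})\big)+2\sum_{i<j}|\alpha_i\alpha_j|\sqrt{1-\mathcal{E}(\ket{\phi_i})}\sqrt{1-\mathcal{E}(\ket{\phi_j})}$. Feeding this into $\mathcal{E}(\ket{\Psi})=1-|\langle\varphi^*|\Psi\rangle|^2$ and using the normalization $\sum_i|\alpha_i|^2=1$ to rewrite $1-\sum_i|\alpha_i|^2\big(1-\mathcal{E}(\ket{\phi_i})\big)=\sum_i|\alpha_i|^2\mathcal{E}(\ket{\phi_i})$ yields precisely the inequality (\ref{bound-superpozycja}).

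I do not expect a genuine obstacle here: the statement follows from the triangle inequality combined with the state-independence of $\mathcal{S}$, which is exactly why the bound carries over verbatim from the GM case of Ref. \cite{Song} to every geometric measure of the form (\ref{miara-geometryczna}). The one point worth stating with care is this state-independence of $\mathcal{S}$, since it is what licenses testing the optimal witness $\ket{\varphi^*}$ for $\ket{\Psi}$ against each individual $\ket{\phi_i}$. It is also worth remarking that pairwise orthogonality of the $\ket{\phi_i}$ is not used in the inequality chain itself; it enters only through ensuring, together with $\sum_i|\alpha_i|^2=1$, that $\ket{\Psi}$ is normalized and hence that $\mathcal{E}(\ket{\Psi})$ is well defined.
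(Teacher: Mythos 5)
Your proof is correct and follows essentially the same route as the paper's: expand $\langle\varphi^*|\Psi\rangle$ via the triangle inequality and use that the optimizer for $\ket{\Psi}$ is an admissible competitor in the maximization defining each $\mathcal{E}(\ket{\phi_i})$, so every overlap is bounded by $\sqrt{1-\mathcal{E}(\ket{\phi_i})}$. The only cosmetic difference is that you bound all overlaps before squaring while the paper squares first and then bounds the diagonal and cross terms; your closing remark that orthogonality enters only through normalization is accurate and applies equally to the paper's argument.
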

\begin{proof}
To make the paper self-contained a proof of this inequality was added in Appendix \ref{AppA}.
\end{proof}

Using inequality (\ref{bound-superpozycja}) we  derive the main ingredient of our condition, which is a lower bound on the minimal subspace entanglement in terms of the entanglement of vectors spanning that subspace. Precisely, we have the following fact.

\begin{fakt}\label{fakt-glowny}
Consider a subspace $V$ spanned by $k$ pairwise orthogonal pure states $\ket{\phi_i}\in\mathcal{H}_{N,d}$. 
Then, the minimal subspace entanglement of $V$ is bounded from 
below in terms of $\mathcal{E}(\ket{\phi_i})$ as 
\begin{equation}\label{bound-min-subspace}
\mathcal{E}_{\min}(V)\geq \sum_{i=1}^{k}\mathcal{E}(\ket{\phi_i})-(k-1),
\end{equation}
where, as before, $\mathcal{E}$ can be any geometric quantifier of the form (\ref{miara-geometryczna}).
\end{fakt}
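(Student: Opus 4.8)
The plan is to derive the minimal subspace entanglement bound \eqref{bound-min-subspace} directly from the superposition inequality in Fact~\ref{fakt1}. An arbitrary vector $\ket{\Psi}\in V$ is a superposition $\ket{\Psi}=\sum_{i=1}^k\alpha_i\ket{\phi_i}$ with $\sum_i|\alpha_i|^2=1$, so it suffices to show that the right-hand side of \eqref{bound-superpozycja} is bounded below by $\sum_i\mathcal{E}(\ket{\phi_i})-(k-1)$ uniformly over all admissible coefficient vectors $(\alpha_1,\ldots,\alpha_k)$. Minimizing that lower bound over the $\alpha_i$ then yields a bound valid for every $\ket{\Psi}\in V$, hence for $\mathcal{E}_{\min}(V)$.

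First I would introduce the shorthand $a_i=|\alpha_i|$ and $s_i=\sqrt{1-\mathcal{E}(\ket{\phi_i})}$, so that $\mathcal{E}(\ket{\phi_i})=1-s_i^2$ and $s_i\in[0,1]$. The right-hand side of \eqref{bound-superpozycja} becomes
\begin{equation}
\sum_{i=1}^k a_i^2(1-s_i^2)-2\sum_{i<j}a_ia_j s_is_j,
\end{equation}
which I would rewrite, using $\sum_i a_i^2=1$, as
\begin{equation}
1-\sum_{i=1}^k a_i^2 s_i^2-2\sum_{i<j}a_ia_j s_is_j
=1-\Bigl(\sum_{i=1}^k a_i s_i\Bigr)^2.
\end{equation}
So the entire task reduces to upper-bounding $\sum_i a_i s_i$ over the constraint $\sum_i a_i^2=1$, $a_i\ge 0$.

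The key step is a clean estimate of $\bigl(\sum_i a_i s_i\bigr)^2$. Since the goal is the symmetric bound $\sum_i(1-s_i^2)-(k-1)=1-\sum_i s_i^2$, I would aim to prove $\bigl(\sum_i a_i s_i\bigr)^2\le \sum_i s_i^2$. This follows from Cauchy--Schwarz: $\bigl(\sum_i a_i s_i\bigr)^2\le\bigl(\sum_i a_i^2\bigr)\bigl(\sum_i s_i^2\bigr)=\sum_i s_i^2$, using the normalization $\sum_i a_i^2=1$. Substituting back gives $\mathcal{E}(\ket{\Psi})\ge 1-\sum_i s_i^2=\sum_i\mathcal{E}(\ket{\phi_i})-(k-1)$, as claimed. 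Taking the minimum over $\ket{\Psi}\in V$ leaves the right-hand side unchanged since it no longer depends on the $\alpha_i$, which completes the argument.

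The only subtle point, and the one I expect to warrant a line of care, is that \eqref{bound-superpozycja} is a lower bound whose right-hand side can be negative, so the inequality is vacuous unless the $\mathcal{E}(\ket{\phi_i})$ are individually large enough; this is exactly why the criterion is only \emph{sufficient}. I would note that the passage from the superposition bound to the subspace bound is uniform precisely because the Cauchy--Schwarz estimate eliminates all dependence on the superposition coefficients, yielding a single coefficient-independent number. No case distinction on the $\alpha_i$ is needed, and the derivation works verbatim for any geometric measure of the form \eqref{miara-geometryczna} since Fact~\ref{fakt1} already holds at that level of generality.
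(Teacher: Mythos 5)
Your proof is correct and follows essentially the same route as the paper's first proof in Appendix~\ref{AppB}: both rewrite the right-hand side of \eqref{bound-superpozycja} as $1-\bigl(\sum_i a_i s_i\bigr)^2$ with $s_i=\sqrt{1-\mathcal{E}(\ket{\phi_i})}$ and then bound the inner product $\sum_i a_i s_i$ over unit vectors, the paper by identifying the optimal $\ket{a}\propto(s_1,\ldots,s_k)^T$ and you by Cauchy--Schwarz, which is the same estimate. No gaps.
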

\begin{proof}
Two proofs of this fact are presented in Appendix~\ref{AppB}.
\end{proof}

Inequality (\ref{bound-min-subspace}) can be then used to 
formulate a simple sufficient condition for a subspace $V$ to have certain entanglement property such as being completely or genuinely entangled. 

\begin{fakt} \label{criterion}
(i) Given an $m$--dimensional bipartite subspace $V\subset\mathcal{H}_{2,d}$, if there is an orthonormal basis
$\{\ket{\phi_i}\}_{i=1}^m$ for $V$ such that 
\begin{equation}\label{kryterium}
\sum_{i=1}^m E_{r}(\ket{\phi_i})-(m-1)>0
\end{equation}
with $r \ge 2$, then $V$ is a CES containing only vectors with the Schmidt rank at least $r$.  

(ii) Given an $m$--dimensional multipartite subspace $V\subset\mathcal{H}_{N,d}$,  if there is an orthonormal basis
$\{\ket{\phi_i}\}_{i=1}^m$ for $V$ such that 
\begin{equation}\label{kryterium-multiparty}
\sum_{i=1}^m E_{k}^{producib}(\ket{\phi_i})-(m-1)>0,
\end{equation}
with $k\ge 2$, then $V$ is a CES. Morevoer, if (\ref{kryterium-multiparty}) holds for $k=N$, then $V$ is a GES. 
\end{fakt}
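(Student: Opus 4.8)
The plan is to derive the criterion as a direct corollary of Fact~\ref{fakt-glowny}, specialising the generic geometric measure $\mathcal{E}$ to the appropriate concrete quantifier in each part and then interpreting when the resulting lower bound is strictly positive. The observation that drives everything is that each of the measures $E_r$, $E_k^{producib}$, $E_{GM}$, $E_{GGM}$ is itself of the form \eqref{miara-geometryczna} for a suitable choice of the reference set $\mathcal{S}$ (namely $Sch_{r-1}$, $\mathcal{P}_{k-1}$, etc.), so Fact~\ref{fakt-glowny} applies verbatim with $\mathcal{E}$ taken to be that measure.

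For part (i), I would fix the orthonormal basis $\{\ket{\phi_i}\}_{i=1}^m$ satisfying \eqref{kryterium} and apply inequality \eqref{bound-min-subspace} with $\mathcal{E}=E_r$, yielding
\begin{equation}
(E_r)_{\min}(V)\geq \sum_{i=1}^m E_r(\ket{\phi_i})-(m-1)>0.
\end{equation}
The content is then purely in the interpretation: $(E_r)_{\min}(V)>0$ means every $\ket{\psi}\in V$ has $E_r(\ket{\psi})>0$, and by the definition \eqref{r-schmidt} together with the formula \eqref{KyFan}, $E_r(\ket{\psi})>0$ is equivalent to $\ket{\psi}\notin Sch_{r-1}$, i.e. the Schmidt rank of $\ket{\psi}$ is at least $r$. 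Since this holds for \emph{every} nonzero vector in $V$, and taking $r\geq 2$ in particular excludes Schmidt rank one (product) vectors, $V$ is a CES all of whose vectors have Schmidt rank at least $r$.

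For part (ii), the argument is structurally identical but uses the multipartite measures. Applying \eqref{bound-min-subspace} with $\mathcal{E}=E_k^{producib}$ and the basis satisfying \eqref{kryterium-multiparty} gives $(E_k^{producib})_{\min}(V)>0$, which by \eqref{k-producible} means no $\ket{\psi}\in V$ lies in $\mathcal{P}_{k-1}$; for $k\geq 2$ this rules out $\mathcal{P}_1$, the fully product states, so $V$ is a CES. Specialising to $k=N$, positivity of $(E_{GGM})_{\min}(V)=(E_N^{producib})_{\min}(V)$ means no vector of $V$ is $(N-1)$-producible, equivalently none is biproduct, which is exactly the statement that every vector of $V$ is GME; hence $V$ is a GES.

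The only genuinely delicate point, and the one I would state carefully rather than gloss over, is the logical step from ``the minimum over $V$ is strictly positive'' to ``every vector of $V$ has the desired property,'' which relies on $\mathcal{E}_{\min}(V)$ being an honest minimum attained on the (compact) set of normalised vectors in $V$; this is what guarantees that a strictly positive lower bound on the minimum forces $\mathcal{E}(\ket{\psi})>0$ for all $\ket{\psi}\in V$, with no vector slipping through. Beyond that, everything is bookkeeping: matching each named measure to its reference set $\mathcal{S}$ and translating $\mathcal{E}(\ket{\psi})>0$ into the corresponding membership statement ($\ket{\psi}\notin Sch_{r-1}$, $\ket{\psi}\notin\mathcal{P}_{k-1}$). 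I expect no serious obstacle, since the substantive inequality has already been established in Fact~\ref{fakt-glowny}; the criterion is essentially its restatement in the language of the specific entanglement structures.
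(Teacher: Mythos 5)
Your proposal is correct and follows exactly the route the paper intends: Fact~\ref{criterion} is stated there as an immediate corollary of Fact~\ref{fakt-glowny}, obtained by specialising $\mathcal{E}$ to $E_r$ or $E_k^{producib}$ and reading off what positivity of the lower bound means for membership in $Sch_{r-1}$ or $\mathcal{P}_{k-1}$. The only remark worth making is that your ``delicate point'' about attainment of the minimum is unnecessary: the chain $\mathcal{E}(\ket{\psi})\geq \mathcal{E}_{\min}(V)\geq \sum_i\mathcal{E}(\ket{\phi_i})-(m-1)>0$ holds for every normalised $\ket{\psi}\in V$ whether or not the infimum is attained.
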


We must emphasize here that, although stated for equal local dimensions, the result also holds in the general case.

An important remark here is that Fact \ref{criterion} refers to any basis and in some cases it may be beneficial to check the conditions for different bases. 
In the next section we give an elementary example showing that indeed such change of basis may help detect entanglement of a subspace. On the other hand, if a measure of the basis states is not available one can use lower bounds on their individual entanglements [these are always easily obtained, e.g., by choosing particular states instead of the optimal ones in (\ref{miara-geometryczna})], which would clearly result in a weaker detectability of a given subspace.

A natural consequence of Fact \ref{criterion} is the following entanglemement criterion.

\begin{fakt}\label{ent-criterion}
	Given is a state $\varrho=\sum_{i=1}^k p_i \proj{\psi_i} $ acting on $\mathcal{H}_{N,d}$ with orthogonal $\psi_i$'s. 
	 If $\sum_{i=1}^k \mathcal{E}(\ket{\psi_i})-(k-1)>0$, where  $\mathcal{E}$ is a geometric measure of the form (\ref{miara-geometryczna}), which is non--vanishing only on entangled states (i.e., vanishes only on fully product states), then the state is entangled. If $\mathcal{E}$ in the above condition is a geometric measure of genuine multiparty entanglement, then $\varrho$ is genuinely multiparty entangled.
\end{fakt}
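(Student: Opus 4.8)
The plan is to reduce Fact \ref{ent-criterion} to the already-established Fact \ref{criterion} (or, more directly, to the subspace bound of Fact \ref{fakt-glowny}) by passing from the mixed state $\varrho$ to the subspace $V$ spanned by its eigenvectors. First I would observe that since the $\ket{\psi_i}$ are orthogonal, they form an orthonormal basis of an at most $k$-dimensional subspace $V=\mathrm{span}\{\ket{\psi_i}\}_{i=1}^k$, and that the support of $\varrho$ is contained in $V$. The hypothesis $\sum_{i=1}^k \mathcal{E}(\ket{\psi_i})-(k-1)>0$ is exactly the condition appearing in Fact \ref{fakt-glowny}, so that bound immediately gives $\mathcal{E}_{\min}(V)>0$.

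Next I would invoke the interpretation of $\mathcal{E}_{\min}(V)>0$: every vector in $V$ has strictly positive $\mathcal{E}$. When $\mathcal{E}$ is chosen (as assumed) to be a geometric measure that vanishes only on fully product states, this means $V$ contains no fully product vector, i.e. $V$ is a CES in the multipartite sense. Since the support of $\varrho$ lies in $V$, no pure state in that support is fully product, and a standard fact is that a mixed state whose range is void of fully product vectors cannot be separable; hence $\varrho$ is entangled. For the genuine-entanglement claim I would argue analogously: if $\mathcal{E}$ is a geometric measure of genuine entanglement (vanishing only on biproduct states), then $\mathcal{E}_{\min}(V)>0$ means $V$ is a GES, so every vector in the range of $\varrho$ is GME, and a state supported on a GES is itself genuinely multipartite entangled.

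The routine calculational content is essentially nil here, because the inequality work has already been done in Fact \ref{fakt-glowny}; the statement is a packaging of that bound together with the definitional equivalences ``$\mathcal{E}_{\min}(V)>0\Leftrightarrow V$ is a CES/GES'' recalled at the end of Section \ref{miary}. The only genuine step that requires a word of justification, and which I expect to be the main (if minor) obstacle, is the implication \emph{range of $\varrho$ avoids product (resp.\ biproduct) vectors $\Rightarrow$ $\varrho$ is entangled (resp.\ GME)}. I would handle this by the contrapositive: if $\varrho$ were separable it would admit a decomposition into fully product pure states, each of which necessarily lies in the range of $\varrho$, contradicting the CES property of $V$; the same argument with ``biproduct'' in place of ``product'' settles the genuine case. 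This uses only the elementary fact that every pure state appearing in any convex decomposition of $\varrho$ belongs to its range.

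One should note that the statement is insensitive to the precise value of $k$ relative to $\dim V$ and works verbatim for unequal local dimensions, since Fact \ref{fakt-glowny} was established at that level of generality. Thus the proof is short: apply Fact \ref{fakt-glowny} to obtain $\mathcal{E}_{\min}(V)>0$, translate this into the subspace being a CES or a GES according to the nature of $\mathcal{E}$, and conclude via the support argument that $\varrho$ inherits the corresponding entanglement property.
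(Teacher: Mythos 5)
Your proposal is correct and follows essentially the same route as the paper: apply Fact~\ref{fakt-glowny} to conclude that $V=\mathrm{span}\{\ket{\psi_i}\}$ is a CES (resp.\ GES), and then use the fact that a state supported on such a subspace is entangled (resp.\ GME). You merely spell out the support argument (every pure state in a convex decomposition of $\varrho$ lies in its range) that the paper takes for granted, which is a welcome but inessential elaboration.
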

\begin{proof}
	Given the premise, by Fact \ref{fakt-glowny}, we conlude that the subspace $\mathcal{V}=\mathrm{span} \{\ket{\psi_i}\}$ is entangled. Further, any state supported on an entangled subspace is entangled. Obviously, if $\mathcal{V}$ is a GES the state is GME.
\end{proof}

Let us now move to applications of Fact \ref{criterion} and
provide  a few examples of entangled subspaces that are correctly identified as such with the aid of our bound and also discuss its applicability in general. We mainly focus on the most widespread measure -- the (generalized) geometric measure of entanglement.

\section{Applications in bipartite case}

Consider first subspaces of  $\mathcal{H}_{2 \times d}=\mathbbm{C}^2\otimes \mathbbm{C}^d$. We can limit ourselves here to $d>2$ since, as it is well known, there are no CESs in a system of two qubits -- any two-dimensional (or more) subspace of $\mathbbm{C}^2\otimes \mathbbm{C}^2$ contains a product vector.
Now, any entangled state in $\mathcal{H}_{2 \times d}$
is of Schmidt rank two (in other words, states from $\mathcal{H}_{2 \times d}$ are in fact two--qubit states) and the maximal value of the geometric measure of entanglement for a Schmidt rank-two state is $1/2$. This means that even if we take $k$ maximally entangled vectors from
$\mathcal{H}_{2 \times d}$  achieving this value,  we can make the right-hand side of the bound greater than zero
only if $k<2$, which is a trivial case of a single state CES. This limits utility of the criterion to systems with at least qutrit subsystems.

Let us then move on to a more complex situation of a two-qudit Hilbert space $\mathcal{H}_{2,d}=\mathbbm{C}^d\otimes\mathbbm{C}^d$ 
and consider subspaces  $V\subset \mathcal{H}_{2,d}$ spanned by $k$ entangled orthogonal pure states $\ket{\phi_i}$.

We start off by considering the case of the  geometric measure of entanglement.
The maximal dimension of
a CES that can be detected with Fact \ref{criterion} is bounded as 
$\dim V<d$, while the maximal CES is of dimension $(d-1)^2$. To see this explicitly let us 
assume that all vectors $\ket{\phi_i}$ have the same GM, denoted $E'$. Then, the criterion detects complete entanglement of that subspace if $k< 1/(1-E')$. Taking $E'$ to be the maximal GM achievable in $\mathcal{H}_{2,d}$, i.e., $E'=(d-1)/d$, the condition gives $k<d$. 
This has an immediate implication that any subspace of $\mathcal{H}_{2,d}$ spanned by $k<d$ mutually orthogonal maximally entangled  vectors is a 
CES. To give an example of such a subspace consider the following set of $d$ vectors
\begin{equation}\label{vectors}
\ket{\psi_j}=\frac{1}{\sqrt{d}}\sum_{i=0}^{d-1}\omega^{ij}\ket{ii}, \quad \omega=\mathrm{exp}(2\pi\mathbbm{i}/d), 
\end{equation}
for $j=0,\ldots,d-1$.
By the argument above, any $k$-element subset of them with $k<d$ will span a CES whose entanglement is lower-bounded as $E_{\min}(V)\geq 1-k/d$. 
Moreover,  the $d$-dimensional subspace spanned by all of these vectors clearly contains a product vector, which can be obtained by simply constructing an equal superposition of all $\ket{\psi_j}$'s, that is,
$(1/\sqrt{d})\sum_{j=0}^{d-1}\ket{\psi_j}=\ket{00}$. This implies that the criterion is in this sense tight.

Further, we observe  that
  criterion (\ref{kryterium}) can be reformulated in terms of 
the Schmidt coefficients. Precisely, with the aid of  formula (\ref{KyFan}) we obtain the following statement: {\it let $\lambda_{j}^i$ be the Schmidt coefficients of $k$ basis states  $\ket{\varphi_i}$ of $V$; further, let $r_i$ be their Schmidt ranks; if}
\begin{equation}
 \sum_{i=1}^k \sum_{j=1}^{r_i}  (\lambda_{j}^i)^2 < 1 
\end{equation}
{\it then $V$ is a CES}.

As an example to the above statement, let us consider again vectors (\ref{vectors}). For them $E_r=1-(r-1)/d$, and hence for any subset of $k$ vectors Eq. (\ref{bound-min-subspace}) gives
\begin{equation}
\mathcal{E}_{\min}^r(V)\geq 1-k\frac{r-1}{d}.
\end{equation}
The right-hand side exceeds zero if $k<d/(r-1)$, and consequently, 
any $k$-element subset of $d$ vectors (\ref{vectors}) with $k<d/(r-1)$ spans
a CES in which all vectors have Schmidt rank at least $r$. In particular,
in the extremal case $r=d$, the criterion detects only one-dimensional 
subspaces.  Clearly, in the case $r=2$ we recover what we have previously established for these vectors.

Let us now see how our criterion relates to the one obtained by Gour and Roy \cite{Gour}. Their result is the following: let $V$ be spanned by $k$ bipartite orthogonal $\psi_i$'s with the Schmidt ranks $r(\psi_1)\le r(\psi_2) \le \dots \le r(\psi_k)$; then 
\begin{equation}\label{Gour}
r_{\min}(V)\geq \min_{m=1,2,\dots,k}\{r(\psi_m) - \sum_{i=1}^{m-1} r(\psi_i)   \},
\end{equation}
where $r_{\min}(V)$ is the smallest Schmidt rank among the states from $V$.

First, an immediate observation is that the condition (\ref{Gour}) cannot detect subspaces for any local dimensions and its utility is limited to rather large subspaces with $d\ge 2^k$. On the other hand, our result is applicable to any $d\ge 3$ making it in this sense more universal. Second, certain combinations of the Schmidt ranks of the basis states lead to a trivial bound (zero or one), while our bound is free from this disadvantage. For example, condition (\ref{Gour}) cannot detect entanglement of two dimensional subspaces spanned by states with equal Schmidt ranks, whereas relying on (\ref{kryterium}) we have shown above that any two states of the form (\ref{vectors}) do span a CES. To make the comparison fair, however, we note that this does not mean that our criterion is stronger in general - there exist subspaces not detected by our criterion but detected by (\ref{Gour}). Such an exemplary subspace in $\mathbbm{C}^4\otimes \mathbbm{C}^4$ is spanned by the vectors: $\ket{\psi_1}=1/\sqrt{2}(\ket{22}-\ket{33})$ and $\ket{\psi_2}=2/\sqrt{7}\ket{00}+1/\sqrt{7}(\ket{11}+\ket{22}+\ket{33})$.

\section{Applications in multipartite case}

Let us now move to the richer multipartite case and consider both CESs and GESs. We put particular emphasis on subspaces of the symmetric subspaces as those are most important from the practical point of view. 
Except a simple case in the next paragraph, in the following examples we check the conditions of the criterion for one -- the most natural -- basis.

\subsection{Completely entangled subspaces}\label{sec:CES}

It is important that in the multipartite case the restriction $k<d$ no longer holds if we are interested in the sole fact whether the subspaces are entangled or not, meaning that qubit subspaces are also detectable. This can be seen for example by considering the subspace spanned by the GHZ state and the W state, for which the GM is, respectively, $1/2$ and $5/9$.  
The latter subspace can also be used to illustrate benefits of using different basis to check condition (\ref{kryterium}) or (\ref{kryterium-multiparty}) . If we take its spanning vectors as $\ket{\phi_1}=\sqrt{2/5}|GHZ\rangle+\sqrt{3/5} |W\rangle$ and $\ket{\phi_2}=\sqrt{3/5}|GHZ\rangle-\sqrt{2/5} |W\rangle$, then the bound from 
Fact \ref{fakt-glowny} is trivial. 

\subsubsection{$N$--qubit GHZ and W states}

Let us now consider a more general case already studied in the literature, namely that of 
two-dimensional subspaces $V_{N,2}$
spanned by the $N$-qubit GHZ state and the $W$ state, which is simply $\ket{D_{N,1}}$. It is known that $E(\ket{\mathrm{GHZ}_{N,2}})=1/2$, whereas $E(\ket{D_{N,1}})=1-[(N-1)/N]^{N-1}$ \cite{GME2}. We then have 
\begin{equation}
E_{\min}(V_{N,2})\geq \frac{1}{2}-\left(\frac{N-1}{N}\right)^{N-1}.
\end{equation}
It is not difficult to see that the right-hand side of the above
is greater than zero for any $N\geq 3$ 
and thus these two states span a CES for any $N\geq 3$.

\subsubsection{Qubit Dicke states}

We now investigate subspaces of the symmetric subspace, that is subspaces spanned by the Dicke states $\ket{D_{N,k}}$. In general it holds \cite{Wei_Dicke}

\begin{equation}\label{Raimat}
E_{GM}(\ket{D_{N,k}})=1-\binom{N}{k}\left(\frac{k}{N}\right)^k\left(\frac{N-k}{N}\right)^{N-k}.
\end{equation}

For any number of qubits $N$, states with $k = 0, N$ are fully separable.
On the other hand, it easy to verify that all other states are entangled and the subspace spanned by $\ket{D_{N,k}}$, $k=1,2,\dots, N-1$, is a CES and its entanglement equals $1/2^{N-1}$. 
In what follows we will try to establish how large a subspace of this subspace can be detected with our criterion.

Clearly, the states with the highest entanglement are those with $k=N/2$ (even $N$) or $k=\frac{N\pm1}{2}$ (odd $N$) and the entanglement of the remaining states is a decreasing function when $k$ moves away from these values.
One thus needs to analyze how many states around the said central $k$ can be used to create a CES detectable by the criterion. 
To illustrate the approach, we assume that $N$ is even and
 consider a set $\Pi_{m} = \{\ket{D_{N,m}},\dots,\ket{D_{N,N/2}} ,\dots,\ket{D_{N,N-m}}\}$  of $|\Pi_m|=N-2m+1$ Dicke states. The expressions encountered in the calculations are intractable analytically in full generality and we will use approximations
 to establish the value of $m$ above which subspaces spanned by $\Pi_m$ are certainly CESs by Fact \ref{criterion}. This will lead to a lower bound on the dimension of the largest CES that is in fact detected by the criterion for a given $N$.
 
 Condition (\ref{kryterium}) applied to $\Pi_m$ yields
 \begin{eqnarray}\label{trudny-warunek}
 \sum_{k=m}^{N-m}E_{GM}(\ket{D_{N,k}}) -(|\Pi_m|-1)>0
 \end{eqnarray}
with $E_{GM}$ of the Dicke states given by (\ref{Raimat}).
 It is verv difficult to find a closed--form expression for the sum for any number of states  (it is plausible that no such form exists) and from this infer the actual threshold value of $m$.
Observe, however, that the sum over $\Pi_m$ can  be bounded as follows 
\begin{equation}
\begin{split}
& \sum_{k=m}^{N-m} E_{GM}(\ket{D_{N,k}}) > |\Pi_m| \times E_{GM}(\ket{D_{N,m}}) = \\  
& |\Pi_m|  \left[1 - \binom{N}{m} \bigg(\frac{m}{N}\bigg)^m \bigg(\frac{N-m}{N}\bigg)^{N-m} \right] \\
&> |\Pi_m| \left[1-  \frac{e}{2\pi} \sqrt{\frac{N}{m(N-m)}} \right]   . 
\end{split}
\end{equation}
where in the first inequality we have bounded the entanglement of each state by the entanglement of the least entangled one, i.e., $\ket{D_{N,m}}$, while in the second inequality we have used the bounds linked to the Stirling approximation: $\sqrt{2\pi} n^{n+\frac{1}{2}}e^{-n}<n!<e n^{n+\frac{1}{2}}e^{-n}$.
If we now use this value in (\ref{trudny-warunek}) instead of the exact one, we can still satisfy the inequality for some  $m$. In this manner we will obtain a condition yielding a value of $m$ larger than the true threshold but also guaranteeing that the subspace is a CES (yet smaller than the optimal one).  There follows 
that if the following inequality is satisfied, then the subspace spanned by $\Pi_m$ is a CES
\begin{equation}\label{eq:approx_Dicke}
   (N-2m+1) \frac{e}{2\pi} \sqrt{\frac{N}{m(N-m)}}  < 1,
\end{equation}
where we have plugged back in $|\Pi_m|=N-2m+1$.
Solving this for $m$ we obtain:
\begin{equation}\label{eq:bound_Dicke}
    m > 
    \frac{N}{2}-\frac{\pi  \sqrt{N \left(e^2 \left(N^2-1\right)+\pi ^2 N\right)}-e^2 N}{2 \left(e^2 N+\pi ^2\right)},
\end{equation}
which for large $N$ is approximated by 
\begin{equation}
    m > \frac{N}{2} - \frac{\pi}{2e}\sqrt{N},
\end{equation}
so the deviation from the central $k = N/2$ is of the square root order. In consequence, for large $N$, subspaces with dimensions of order $\sqrt{N}$ are detected.

In view of the approximations used, one might now wonder how good the approximation given in (\ref{eq:bound_Dicke}) is in comparison to the true value stemming from (\ref{trudny-warunek}), which can be obtained by direct numerical summation and verification whether the inequality holds for a given $m$.
Accuracy of the analytical bound (\ref{eq:bound_Dicke}) is presented in Fig.~\ref{fig:Dicke_bound}.

\begin{figure}[h!]
	\includegraphics[width=8.5cm]{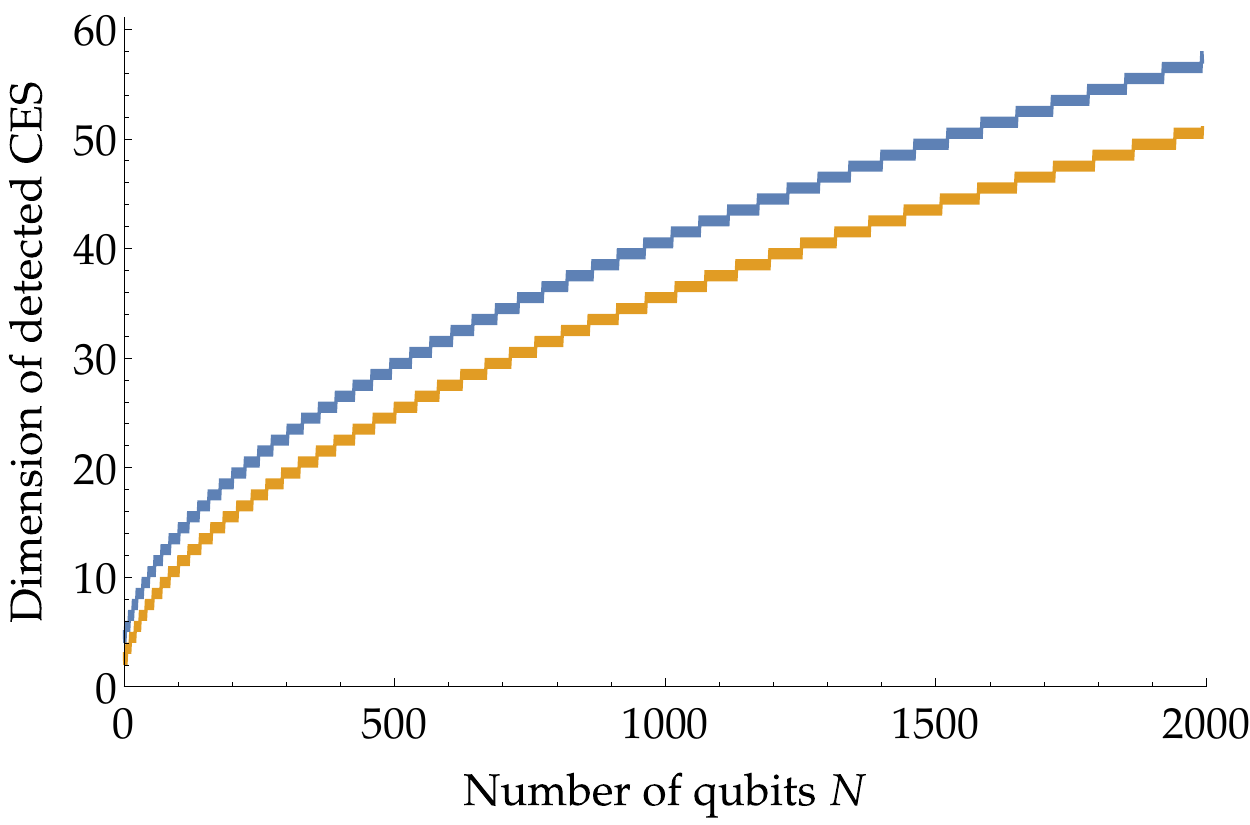}
	\caption{Power of the criterion for identifying subspaces spanned by qubit Dicke states as being CESs. Upper curve gives the dimensions of the largest detectable CESs and corresponds to straightforward summation over the vectors closest to the central one \mbox{($k = N/2$)} in (\ref{trudny-warunek}), while the bottom one denotes an analytical bound on the dimensions following from Eq.~(\ref{eq:approx_Dicke}).}\label{fig:Dicke_bound}
\end{figure}

\subsubsection{Antisymmetric subspace}
As a complementary example, let us now consider the antisymmetric subspace of $\mathcal{H}_{N,d}$, which is known to be a CES (in fact a GES too, see the upcoming section), and check the power of the criterion in identifying it as such. It is known that  the entanglement of any vector is given by $E_{\mathrm{GM}}=1-1/N!$. Thus, the criterion gives the condition on the number of parties and local dimensions ${d \choose N} < N!$. One can check analytically that it certainly detects whenever $N \ge \left \lfloor{\frac{d+1}{2}}\right \rfloor +1$, although this is far from optimal. The results are plotted in  Fig.~\ref{antysym-entanglement} with an additional bound, which works remarkably good in the considered region (there are just a few single points below $d=50$ falling under this bound in the region of interest). We can see that the criterion is quite powerful for the CES case of the antisymmetric subspace.

\begin{figure}[h!]
	\includegraphics[width=8.5cm]{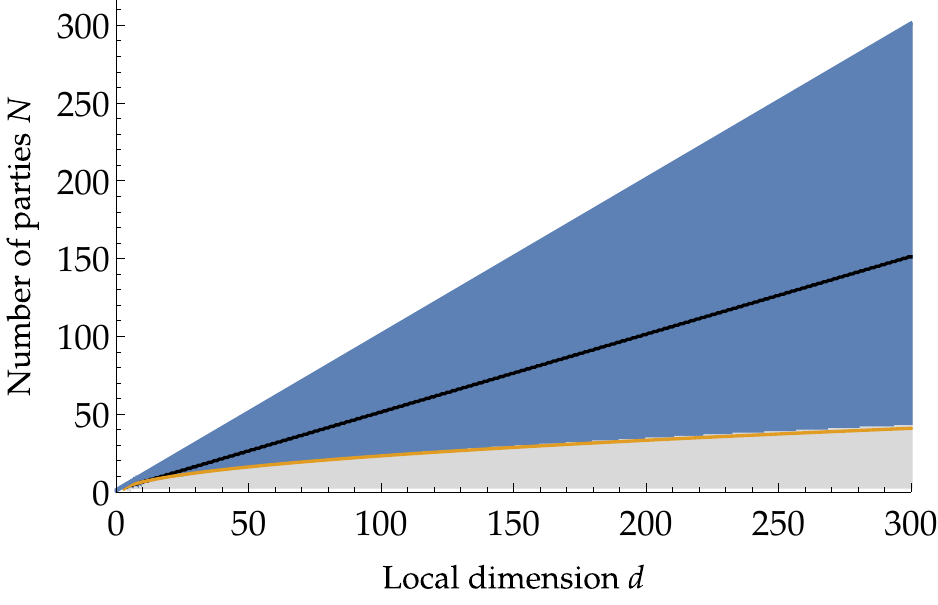}
	\caption{Power of the criterion for identifying the antisymmetric subspace as being a CES. Blue region - detected by the criterion, grey -- not detected, black line denotes $N=\left \lfloor{\frac{d+1}{2}}\right \rfloor +1$, orange curve corresponds to $\sqrt{2d+2}-\sqrt{d-3}-1$.
	}\label{antysym-entanglement}
\end{figure}

\subsubsection{Error correction codes}
The criterion could also be used to verify that certain error correction codes give rise to  entangled subspaces. For example, the geometric measure of entanglement of the two codewords of the seven--qubit Steane code is known to be  $7/8$ \cite{Markham}, immediately implying that the subspace is a CES.

\subsection{Genuinely entangled subspaces}

Let us now investigate the power of our criterion in 
detecting genuinely entangled subspaces using the generalized geometric measure of entanglement.

First, we note that its usefulness is limited to  $d\geq 3$ as for any $N$-qubit pure state $\ket{\psi}$ it holds $E_{GGM}(\ket{\psi})\leq 1/2$.
Second, an observation as in the bipartite case follows that it must be 
$k< d$.
%
To obtain this bound we need to consider all bipartitions of the parties and find the  value of the GM for them, then pick the cut with the smallest value. 
This ''least'' entangled cut is  $1|N-1$ parties, which stems from the fact that bipartite states from  $\mathbbm{C}^d\otimes (\mathbbm{C}^{d})^{\otimes N-1} $ are in fact supported on $\mathbbm{C}^d\otimes \mathbbm{C}^d $ (this again follows from the Schmidt decomposition), and them the maximal amount of entanglement as measured by the GM is $1-1/d$. The result then follows.

\subsubsection{Generalized GHZ}
To provide the first example let us consider  the generalized GHZ states of the form
\begin{equation}\label{gen-GHZ}
\ket{\mathrm{GHZ}'_j}=\frac{1}{\sqrt{d}}\sum_{i=0}^{d-1}\ket{i}^{\otimes N-1}\otimes\ket{i+j},
\end{equation}
where $j=0,\ldots,d-1$ and addition modulo $d$ in the last ket. 
Now, $E_{GGM}$ of any such a state is $(d-1)/d$, and therefore a subspace
spanned by any $k<d$ vectors (\ref{gen-GHZ}) is genuinely entangled (by Fact \ref{ent-criterion} this also applies to mixtures of such states, cf. \cite{FloresGalapon}). Clearly, an equal superposition of all such states gives a biproduct vector.

\subsubsection{Absolutely maximally entangled states}
Absolutely maximally entangled (AME) states are widely studied multipartite states exhibiting in some sense the strongest entanglement in a given system of $N$ qudits  \cite{AME}.
A state $\ket{\psi}\in \mathcal{H}_{N,d}$ is called AME, denoted AME$(N,d)$, iff all reductions of at least half of the subsystems yield the maximally mixed state, i.e., the partial trace $\tr_S \ket{\psi}\bra{\psi} \propto \mathbb{I}$, for any subsystem $S$ of $|S|=\lceil N/2 \rceil$ parties.

Existence of an AME state depends on the local dimension $d$ as well as on the number of parties $N$ (see \cite{table_AME} for the current state of art).
Importantly, if an AME state exists for a given pair $(N,d)$, there follows existence of an orthonormal basis of the whole Hilbert space $\mathcal{H}_{N,d}$ composed of $d^N$ orthogonal AME($N,d$) states \cite{Raissi_2020}.
This observation allows us to use Fact \ref{criterion} to determine how big a GES can be created using AME states. With this aim we only need to know the GGM of AME states, which is easily found to be $E_{GGM}(\ket{\text{AME}_{N,d}}) = 1-1/d$. This is because AME states are maximally entangled across any cut and as such achieve the maximal value of the GM for all cuts. Taking the minimum over all bipartitions we arrive at the claimed value.
Therefore, if in a given quantum system described by $\mathcal{H}^{\otimes N}_{d}$ there exists an AME state, there are also genuinely entangled subspaces of dimensions up to $d-1$, spanned by any set of orthogonal AME states.

\subsubsection{Qudit Dicke states}
Let us now explore the case of genuine entanglement of subspaces of the symmetric subspaces spanned by the qudit Dicke states

\begin{equation}\label{general-Dicke}
\ket{D_{N,\vec{k}}^d}=\sqrt{\frac{\Pi_{i=0}^{d-1}k_i!}{N!}}\sum_p \sigma_p\left(\ket{0}^{\otimes k_0} \ket{1}^{\otimes k_1} \cdots \ket{d-1}^{\otimes k_{d-1}}\right)
\end{equation}
with the sum over distinct permutations.

In Appendix \ref{d-Dicke} we show that the GGM of the Dicke states equals:
	\begin{eqnarray}\label{GM-dicke}
E_{GGM}\left( \ket{D_{N,\vec{k}}^d}     \right) &=&\nonumber \\
&&\hspace{-3 cm} =  \min_{n,\vec{\pi}} \left[ 1- {N \choose n}^{-1} {k_0 \choose \pi_0}  {k_1 \choose \pi_1} \cdots  {k_{d-1} \choose \pi_{d-1}} \right],
\end{eqnarray}
where it holds $\pi_i \le k_i$ and $\sum_i \pi_i=n$.	

There are no known closed forms for the minimisation over $\vec{\pi}$ and thus we have verified by direct search the largest dimensions of detectable GESs composed of $d$--level Dicke states. The results are presented in Fig. \ref{GES-Dicke-plot}.

Nevertheless, it is easy to obtain a bound on the largest detectable GES. The most entangled states for given $d$ have the GGM equal to $1-1/N$. They correspond to vectors $\vec{k}$ with all $k_i$ equal and they exist whenever $d \ge N$. If $d>N$ there are ${d \choose N}$ such vectors and at most $N-1$ of them span a GES according to the criterion. 

\begin{figure}
	\includegraphics[width=8.5cm]{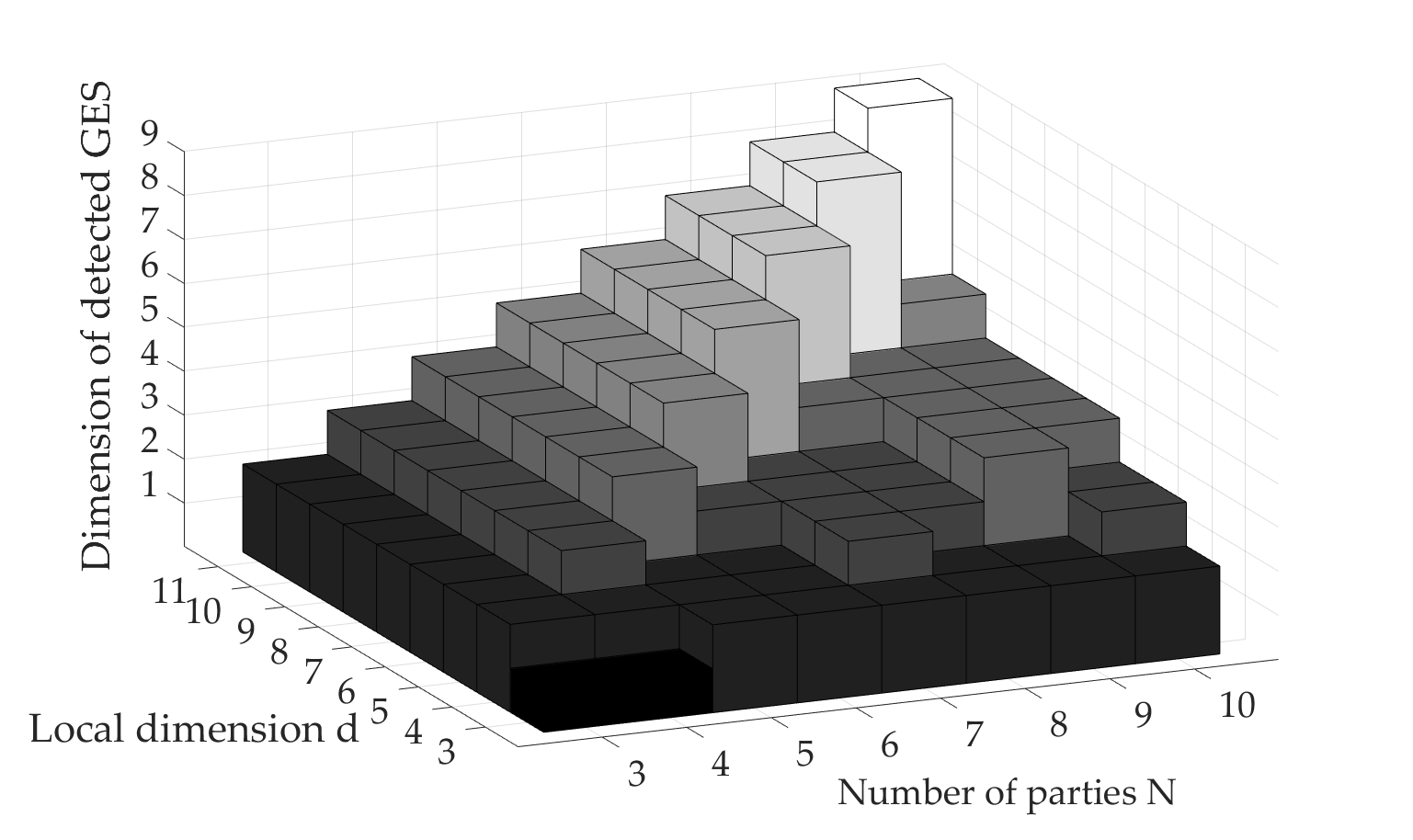}
	\caption{Dimensions of subspaces spanned by the qudit Dicke states identified  as GESs by the criterion for $3\le N \le 10$ and $3 \le d \le 11$.}\label{GES-Dicke-plot}
\end{figure}

\subsubsection{Antisymmetric subspace}

The case of the antisymetric subspace can be immediately solved.
It is known that for any basis vector it holds  $E_{\mathrm{GGM}}=1-1/N$. This implies that any subspace of dimension $N-1$ or less of the antisymmetric subspace is identified as a GES by the criterion.

Concluding this section, we note that the fact that the criterion is weaker in the GES case can be attributed to the fact that in general the GGM is smaller than the GM, while the RHS of (\ref{kryterium}) is the same in both cases.

\section{Conclusions and outlook}

We have considered the problem of judging whether a subspace is entangled or not, both in the completely and genuinely entangled case, on the basis of the amount of entanglement of the basis states. With this aim we have provided a simple sufficient criterion and illustrated the approach with several examples, in particular the all-important symmetric and antisymmetric subspaces. We have also shown that the  condition directly leads to an entanglement criterion for mixed states.

Future work could concern extending the results given here to other entanglement measures. Also, the problem of bounding the maximal subspace entanglement in terms of geometric or other measures deserves separate treatment (cf. \cite{Gour}).
It might also be interesting to look into the possibility of developing other entanglement criteria for mixed states based on the entanglement of the states from the mixture.

\section{Acknowledgments}

GRM is supported by National Science Centre (Poland) under the Maestro grant number DEC2015/18/A/ST2/00274. RA acknowledges the support from the Polish National Science Center through the SONATA BIS project No. 2019/34/E/ST2/00369.

\appendix

\section{Proof of Eq. (\ref{KyFan})} \label{schmidt-proof}

Let the given state that we want to find $E_r$ of, be
\begin{equation}\label{schmidt-app}
\ket{\psi}=\sum_{i=1}^{R}\lambda_i\ket{e_i}\ket{f_i}
\end{equation}
with the Schdmidt coefficients $\lambda_i$ ordered decreasingly and orthonormal bases $\{\ket{e_i}\}_{i=1}^d$ and $\{\ket{f_i}\}_{i=1}^d$ [cf. (\ref{schmidt})]. Assume  $\lambda_{R+1}=\lambda_{R+2}=\dots=\lambda_d=0$.

We optimze (\ref{r-schmidt}) over states of the form
\begin{equation}\label{optymalny}
\ket{\varphi}=\sum_{i,j=1}^d a_{ij}\ket{e_i}\ket{f_j}
\end{equation}
with rank $r-1$ matrix $[A]_{ij}=a_{ij}$. Let its non-zero singular values be $s_1 \ge s_2 \ge \dots \ge s_{r-1}$.

 Due to the von Neumann trace inequality, which states that for complex $n\times n$ matrices $A$ and $B$ with singular values, respectively, $a_1 \ge a_2 \ge \cdots a_n$ and $b_1 \ge b_2 \ge \cdots b_n$, it holds $|\tr AB| \le \sum_i  a_i b_i$,  we have 
\begin{equation}\label{}
|\bra{\psi}\varphi \rangle | =  \left|\sum_{i=1}^d a_{ii}\lambda_i\right| \le \sum_{i=1}^{r-1} s_i\lambda_i,
\end{equation}
which, under the constraint $\sum_i s_i^2 =1$ (normalization of $\ket{\varphi}$), is clearly maximized with the choice  
\begin{equation}\label{}
s_i=\frac{\lambda_i}{\sqrt{\sum_{j=1}^{r-1}\lambda_j^2}}.
\end{equation}
We deduce that the optimal state (\ref{optymalny}) is simply 
\begin{equation}\label{optimal}
\ket{\varphi}=\left(\sum_{j=1}^{r-1}\lambda_j^2\right)^{-\frac{1}{2}}\times\sum_{i=1}^{r-1} \lambda_{i} \ket{e_i}\ket{f_i}.
\end{equation}
The results then follows.

\section{Proof of Fact 1}

\begin{proof}
Consider a superposition of $k$ pure mutually orthogonal states
$\ket{\phi_i}$
\begin{equation}
    \ket{\Psi}=\sum_{i=1}^k\alpha_i\ket{\phi_i},
\end{equation}
and recall that all the considered entanglement quantifiers can be wrapped up 
in a single formula
\begin{equation}\label{AppA:f1}
    \mathcal{E}(\ket{\psi})=1-\max_{\ket{\varphi}\in \mathcal{S}}|\langle\varphi|\psi\rangle|^2,
\end{equation}
where $\mathcal{S}$ is any set considered in the main text. 

Due to the triangle inequality $|x+y|\leq |x|+|y|$, the expression under the maximum on the right-hand side of the above 
for the superposition $\ket{\Psi}$ can be upper bounded as
\begin{eqnarray}
    |\langle\varphi|\Psi\rangle|^2&\leq& \left(\sum_{i=1}^k|\alpha_i||\langle \varphi|\phi_i\rangle|\right)^2\nonumber\\
    &=& \sum_{i=1}^k|\alpha_i|^2 |\langle \varphi|\phi_i\rangle|^2+2\sum_{i<j}|\alpha_i\alpha_j||\langle \varphi|\phi_i\rangle \langle \varphi|\phi_j\rangle|,\nonumber\\
\end{eqnarray}
which holds for any $\varphi$.
Plugging this into Eq. (\ref{AppA:f1}) and using the fact that 
$\sum_i|\alpha_i|^2=1$, we obtain
\begin{eqnarray}\label{AppA:f2}
    \mathcal{E}(\ket{\Psi})&\geq& \sum_{i=1}^k |\alpha_i|^2\mathcal{E}(\ket{\phi_i})-2\max_{\ket{\varphi}\in \mathcal{S}}\sum_{i<j}|\alpha_i\alpha_j||\langle \varphi|\phi_i\rangle \langle \varphi|\phi_j\rangle| \nonumber \\
    &\ge & \sum_{i=1}^k |\alpha_i|^2\mathcal{E}(\ket{\phi_i}) \nonumber \\
    && \hspace{+0.5cm} -2 \sum_{i<j} |\alpha_i\alpha_j|\max_{\ket{\varphi}\in \mathcal{S}}|\langle \varphi|\phi_i\rangle|\max_{\ket{\varphi}\in \mathcal{S}}| \langle \varphi|\phi_j\rangle|,
\end{eqnarray}
where in the second inequality we have first exploited the fact that the maximum of the sum is upper bounded by the sum of maxima, and then bounded from above each maximum of  products by the product of maxima.
With the aid of the fact that  $ \max_{\ket{\varphi}\in\mathcal{S}}|\langle \varphi|\phi_i\rangle| = \sqrt{1-\mathcal{E}(\ket{\phi_i})}$ this gives the claimed inequality.
%
%
\end{proof}
\label{AppA}

\section{ Proof of Fact 2}
\label{AppB}

\begin{proof}
With the aid of inequality~(\ref{bound-superpozycja}) we can bound $\mathcal{E}_{\min}(V)$ from below as
\begin{equation}\label{bound}
\mathcal{E}_{\min}(V)\equiv \min_{\ket{\psi}\in V}\mathcal{E}(\ket{\psi})\geq \min_{\substack{a_i\geq 0\\a_1^2+\ldots+a_k^2=1}}\widetilde{E}(a_1,\ldots,a_k),
\end{equation}
where
\begin{equation}
\widetilde{E}(a_1,\ldots,a_k)=\sum_{i=1}^{k}a_i^2 \mathcal{E}_i-2\sum_{i<j}a_ia_j
\sqrt{1-\mathcal{E}_i}\sqrt{1-\mathcal{E}_j}
\end{equation}
with $\mathcal{E}_i :=\mathcal{E}(\ket{\phi_i})$.

The key observation now is that $\widetilde{E}$ can be 
conveniently rewritten in the following simple form
\begin{equation}\label{function}
\widetilde{E}(a_1,\ldots,a_k)=1-N\langle v|a\rangle^2,
\end{equation}
where $\ket{a}$ is a normalized vector of variables, 
$\ket{a}=(a_1,\ldots,a_k)^T\in\mathbbm{R}^k$, and  
\begin{equation}
\ket{v}=(1/\sqrt{N})(\sqrt{1-\mathcal{E}_1},\ldots,\sqrt{1-\mathcal{E}_k})^T
\end{equation}
with $N$ being the normalization constant defined as
$N=k-\sum_{i}\mathcal{E}_i$. It is clear that the minimal value of the function defined in Eq.~(\ref{function}) is attained for $\ket{a}=\ket{v}$, that is when
\begin{equation}
a_i=\frac{\sqrt{1-\mathcal{E}_i}}{\sqrt{k-\sum_{i}\mathcal{E}_i}},
\end{equation}
and it is given by 
\begin{equation}
\widetilde{E}(a_1,\ldots,a_k)=\sum_{i=1}^k\mathcal{E}_i-(k-1),
\end{equation}
which completes the proof.
\end{proof}

The above proof exploits directly a bound on a superposition of states. There is also another proof, more straightforward, which avoids this and
uses the following result from \cite{Zhu}:
\beq\label{branciardowy}
\mathcal{E}_{\mathrm{min}}(V)=1- \max_{\ket{\varphi}\in \mathcal{S}} \bra{\varphi} \mathcal{V} \ket{\varphi},
\eeq
where $\mathcal{V}$ is the projection onto subspace $V$.
Originally, this formula was derived for the geometric measure of entanglement, but it is easy enough to realize that it also applies to any of the measures considered in the present paper, defined through the properties of the set $\mathcal{S}$.

Let $\mathcal{V}=\sum_{i=1}^k{\proj{\phi_i}}$. Inserting this to (\ref{branciardowy}) we obtain
\begin{eqnarray}
\mathcal{E}_{\mathrm{min}}(V)&=& 1- \max_{\ket{\varphi}\in \mathcal{S}} \bra{\varphi} \sum_{i=1}^k\proj{\phi_i} \ket{\varphi}\nonumber\\
&\ge&  1- \sum_{i=1}^k\max_{\ket{\varphi}\in \mathcal{S}} \bra{\varphi} \phi_i\rangle \langle\phi_i \ket{\varphi}\nonumber\\
&=& 1- \sum_{i=1}^k (1-\mathcal{E}(\phi_i)) \nonumber \\
&=& \sum_{i=1}^k\mathcal{E}(\phi_i) -(k-1).
\end{eqnarray}

\section{GGM of $d$--level Dicke states}\label{d-Dicke}

We start by proving the following statement about the entanglement of $d$--level Dicke states. The result applies to any $d$, including the qubit case $d=2$ for which it agrees with the known formula for the GGM \cite{GGM-toth,GGM-bergmann,GGM-hindusi}.

\begin{fakt}
	The GM of $\ket{D_{N,\vec{k}}^d} $ across a cut $n|N-n$ ($n=1,2,\dots, \lfloor N/2 \rfloor$) is given by
	\begin{widetext}
		\begin{eqnarray}\label{general-GM}
		E_{GM}^{n|N-n}\left( \ket{D_{N,\vec{k}}^d}     \right) &=& 1- \frac{k_0!k_1!\cdots k_{d-1}!}{N!} \max_{\vec{\pi}} \frac{n!}{\pi_0! \pi_1!\cdots \pi_{d-1}!} \frac{(N-n)!}{(k_0-\pi_0)! (k_1-\pi_1)!\cdots (k_{d-1}-\pi_{d-1})!}\\
		&=& 1- {N \choose \vec{k}}^{-1} \max_{\vec{\pi}} {n \choose \vec{\pi}}{N-n \choose \vec{k}-\vec{\pi}} \label{multinomial}\\
		&=& 1- {N \choose n}^{-1} \max_{\vec{\pi}} {k_0 \choose \pi_0}  {k_1 \choose \pi_1} \cdots  {k_{d-1} \choose \pi_{d-1}} \label{binomial} \\
			&=&  \min_{\vec{\pi}} \left[ 1- {N \choose n}^{-1} {k_0 \choose \pi_0}  {k_1 \choose \pi_1} \cdots  {k_{d-1} \choose \pi_{d-1}} \right].\label{binomial-global-min} 
		\end{eqnarray}
	\end{widetext}
	where ${a \choose \vec{b}}= \frac{a!}{b_0!b_1! \cdots b_{d-1}!}$ is the multinomial coefficient, for any $i$ it holds $\pi_i \le k_i$ and $\sum_i \pi_i=n$.	
	
	The GGM equals
	\begin{equation}\label{general-GGM}
	E_{GGM}\left( \ket{D_{N,\vec{k}}^d}     \right) =\min_n E_{GM}^{n|N-n}\left( \ket{D_{N,\vec{k}}^d}     \right). 
	\end{equation}
\end{fakt}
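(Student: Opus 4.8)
The plan is to reduce the computation of the geometric measure across a fixed cut $n|N-n$ to an eigenvalue problem, exploiting the elementary fact that for a bipartite pure state the maximal squared overlap with product vectors equals the largest Schmidt coefficient, i.e.\ the largest eigenvalue of either reduced density operator. Concretely I would write $E_{GM}^{n|N-n}(\ket{D_{N,\vec{k}}^d}) = 1 - \lambda_{\max}(\varrho_A)$, where $\varrho_A$ is the reduction of $\proj{D_{N,\vec{k}}^d}$ onto the block $A$ consisting of the first $n$ parties. This route avoids any separate argument that the optimal product vector can be chosen symmetric; the entire problem collapses to diagonalising $\varrho_A$, which the symmetry of the Dicke state renders transparent.

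The key step is to exhibit the Schmidt decomposition of $\ket{D_{N,\vec{k}}^d}$ across $A|B$ explicitly. Writing the Dicke state as $\binom{N}{\vec{k}}^{-1/2}$ times the uniform sum over all computational-basis arrangements with occupation profile $\vec{k}$, I would group these arrangements by the occupation profile $\vec\pi$ they induce on $A$, which must satisfy $\pi_i \le k_i$ and $\sum_i \pi_i = n$. The sub-sum attached to a fixed $\vec\pi$ factorises into a uniform sum over arrangements of type $\vec\pi$ on $A$ times a uniform sum over arrangements of type $\vec{k}-\vec\pi$ on $B$, and these are exactly unnormalised smaller Dicke states, so that
\[ \ket{D_{N,\vec{k}}^d} = \binom{N}{\vec{k}}^{-1/2}\sum_{\vec\pi}\sqrt{\binom{n}{\vec\pi}\binom{N-n}{\vec{k}-\vec\pi}}\,\ket{D_{n,\vec\pi}^d}_A\otimes\ket{D_{N-n,\vec{k}-\vec\pi}^d}_B. \]
Since the states $\ket{D_{n,\vec\pi}^d}_A$ are orthonormal for distinct $\vec\pi$, and likewise the $B$ states, this is precisely the Schmidt decomposition. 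Hence the eigenvalues of $\varrho_A$ are $c_{\vec\pi}^2 = \binom{N}{\vec{k}}^{-1}\binom{n}{\vec\pi}\binom{N-n}{\vec{k}-\vec\pi}$ and $\lambda_{\max} = \max_{\vec\pi}c_{\vec\pi}^2$, which yields directly the forms \eqref{general-GM} and \eqref{multinomial}.

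The remaining identities are bookkeeping. The passage to \eqref{binomial} is the factorial identity $\binom{N}{\vec{k}}^{-1}\binom{n}{\vec\pi}\binom{N-n}{\vec{k}-\vec\pi} = \binom{N}{n}^{-1}\prod_i\binom{k_i}{\pi_i}$, established by cancelling factorials, while \eqref{binomial-global-min} is merely the rewriting $1-\max_{\vec\pi}x_{\vec\pi} = \min_{\vec\pi}(1-x_{\vec\pi})$. For the GGM I would invoke that $E_{GGM}$ equals the minimum of $E_{GM}$ over all bipartitions, together with permutation symmetry: a permutation of the parties is a unitary fixing $\ket{D_{N,\vec{k}}^d}$ that carries any size-$n$ block to any other, so the spectrum of $\varrho_A$ depends only on $n$, and the minimisation over bipartitions collapses to the minimisation over $n\in\{1,\dots,\lfloor N/2\rfloor\}$ of \eqref{general-GGM}. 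I expect the only genuinely delicate point to be the combinatorial verification of the factorised Schmidt decomposition above, namely counting the arrangements of each type correctly and confirming that the induced $A$ and $B$ components are the normalised lower Dicke states; everything downstream is either standard or a factorial manipulation.
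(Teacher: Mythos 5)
Your proposal is correct and follows essentially the same route as the paper: both rest on the Schmidt decomposition of $\ket{D_{N,\vec{k}}^d}$ across the cut into (unnormalised) products of smaller Dicke states labelled by the induced occupation profile $\vec\pi$, with the GM then read off from the largest squared Schmidt coefficient and the remaining equalities being factorial bookkeeping. Your version is if anything slightly more explicit than the paper's (which asserts the optimal product vector "is evident"), in that you invoke the $1-\lambda_{\max}(\varrho_A)$ characterisation outright and spell out the permutation-symmetry argument reducing the GGM minimisation to a minimisation over $n$.
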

\begin{proof}
	Given $\ket{D_{N,\vec{k}}^d} $ we write it in its Schmidt decomposition for the cut $n|N-n$ which is simply
	\begin{equation}\label{}
	\ket{D_{N,\vec{k}}^d} = {N \choose \vec{k}}^{-\frac{1}{2}} \sum_{\vec{\pi}} \ket{\tilde{D}_{n,\vec{\pi}}^d}\ket{\tilde{D}_{N-n,\vec{k}-\vec{\pi}}^d},
	\end{equation}
	where $\ket{\tilde{D}_{a,\vec{b}}^d}$ is the unnormalized state $\ket{D_{a,\vec{b}}^d}$, i.e., $\ket{\tilde{D}_{a,\vec{b}}^d}= \sqrt{{a \choose \vec{b}}} \ket{D_{a,\vec{b}}^c}$ . It is evident that the optimal state for the computation of the GM is a product of Dicke states $\ket{D_{n,\vec{\pi}}^d}\ket{D_{N-n,\vec{k}-\vec{\pi}}^d}$ with the largest number of terms, which for a given cut is determined by $\vec{\pi}$. Equation (\ref{multinomial}) follows.  The transistion from (\ref{multinomial}) to  (\ref{binomial}) is simple algebra.

	The second statement is obvious and it follows from the very definition of the GGM.
\end{proof}

Let us note that we encounter here multivariate hypergeometric distribution.  The interpretation of the term with the minus sign in (\ref{multinomial}-\ref{binomial-global-min}) is the following: we have $N$ objects of $d$ types, among which there is $k_i$ objects of the $i$--th type; we randomly draw $n$ objects; the said term gives us the probability of drawing $\pi_i$ objects of type $i$.
Although appealing, this identification does not prove very useful as, to our knowledge, there are no closed forms for the maximisations. For this reason, in the computation of the detectable GESs in the general case we have used direct search through all the possible cases of $\vec{k}$, $\vec{\pi}$ and $n$ for a given $(N,d)$. The results are presented in the main text in Fig. \ref{GES-Dicke-plot}.

Here, for illustration purposes, we plot the GGM of the most entangled states as a function of $N$  for chosen values of $d=3,4,5$. 
\begin{figure}[H]\label{max-GGM-Dicke}
	\includegraphics[width=8.5cm]{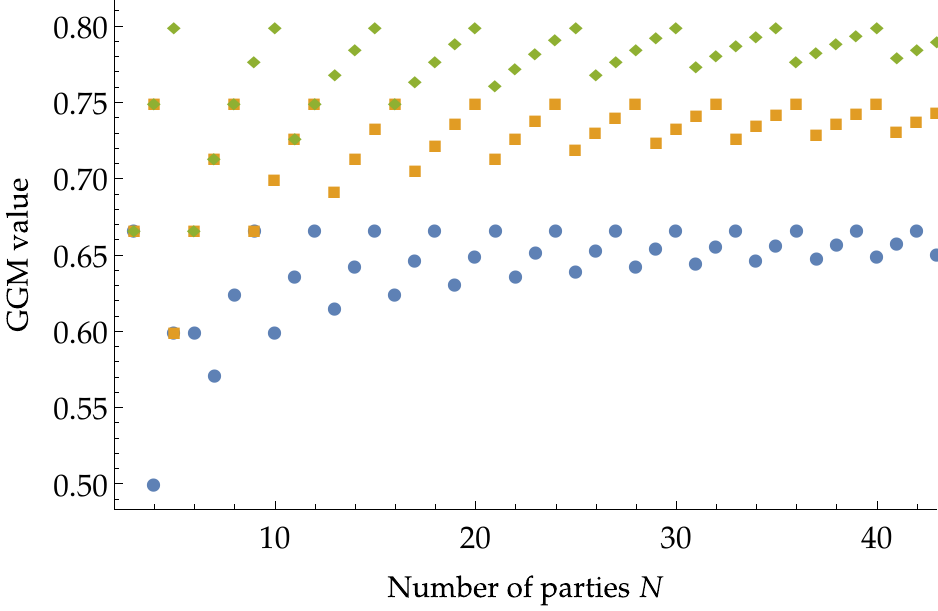}
	\caption{The GGM of the most entangled $d$--level Dicke states as a function of $N$. Blue points correspond to $d=3$, orange --- $d=4$, green --- $d=5$.}
\end{figure}

\end{document}